\theoremstyle{definition}
\newtheorem{theorem}{Proposition}
\newtheorem{lemma}{Lemma}
\newtheorem{corollary}{Corollary}
\newcommand{\relmiddle}[1]{\mathrel{}\middle#1\mathrel{}}
\title{Asymptotic Analysis of Normalized SNR-Based Scheduling in Uplink Cellular Networks with Truncated Channel Inversion Power Control}
\author{%
\IEEEauthorblockN{%
\normalsize Shotaro Kamiya\IEEEauthorrefmark{1}\IEEEauthorrefmark{3},
\normalsize Koji Yamamoto\IEEEauthorrefmark{1}\IEEEauthorrefmark{4},
\normalsize Seong-Lyun~Kim\IEEEauthorrefmark{2},
\normalsize Takayuki Nishio\IEEEauthorrefmark{1}, and
\normalsize Masahiro Morikura\IEEEauthorrefmark{1}
}
\IEEEauthorblockA{\small
\IEEEauthorrefmark{1} Graduate School of Informatics, Kyoto University,
Yoshida-honmachi, Sakyo-ku, Kyoto, 606-8501 Japan,\\
\IEEEauthorrefmark{2} The Radio Resource Management and Optimization Laboratory,
School of Electrical and Electronic Engineering,\\ Yonsei University, 50 Yonsei-ro, Seodaemun-gu, Seoul 120-749, Korea
}
\IEEEauthorblockN{\small \IEEEauthorrefmark{3} kamiya@imc.cce.i.kyoto-u.ac.jp, \IEEEauthorrefmark{4} kyamamot@i.kyoto-u.ac.jp
}
}
\begin{document}
\IEEEoverridecommandlockouts
\IEEEpubid{\makebox[\columnwidth]{978-1-5090-3009-5/17/\$31.00~
\copyright2017 IEEE \hfill} \hspace{\columnsep}\makebox[\columnwidth]{ }}
\maketitle

\begin{abstract}
This paper provides the signal-to-interference-plus-noise ratio (SINR) complimentary cumulative distribution function (CCDF) and average data rate
of the normalized SNR-based scheduling in an uplink cellular network using stochastic geometry.
The uplink analysis is essentially different from the downlink analysis in
that the per-user transmit power control is performed and that the interferers are composed of at most one transmitting user
in each cell other than the target cell.
In addition, as the effect of multi-user diversity varies from cell to cell depending on the number of users involved in the scheduling,
the distribution of the number of users is required to obtain the averaged performance of the scheduling.
This paper derives the SINR CCDF relative to the typical scheduled user by focusing on two incompatible cases,
where the scheduler selects a user from all the users in the corresponding Voronoi cell or does not select users near cell edges.
In each case, the SINR CCDF is marginalized over the distribution of the number of users involved in the scheduling,
which is asymptotically correct if the BS density is sufficiently large or small.
Through the simulations, the accuracies of the analytical results are validated for both cases, and the scheduling gains are evaluated
to confirm the multi-user diversity gain.
\end{abstract}

\section{Introduction}

Stochastic geometry is a powerful mathematical and statistical tool,
enabling the tractable modeling of cellular networks without loss of accuracy,
particularly in a multi-cell environment \cite{Andrews2011}.
The success probability $\mathbb{P}(\mathit{SINR}>\theta)$ (equivalent to signal-to-interference-plus-noise ratio (SINR) complimentary cumulative distribution function (CCDF))
can be expressed in a closed-form for a special case
not only in downlink cellular networks \cite{Andrews2011} but also in uplink cellular networks \cite{ElSawy2014}.

The number of papers using some forms of stochastic geometry have increased \cite{Baccelli2009},
implying that the application range of stochastic geometry is growing steadily.
One of the novel applications is the analysis of channel-aware user scheduling \cite{Ohto2017}.
The authors in \cite{Ohto2017} derived the SINR CCDF and the average data rate of the channel-aware scheduling in a downlink cellular network.
They assumed that each transmission in a typical cell follows the normalized SNR-based scheduling \cite{Yang2006}, in which
the resource block in the cell is assigned to a user giving the highest value of the instantaneous SNR normalized by the short-term average SNR.
They enabled the stochastic geometry analysis by considering the largest order statistic to model the fading gain of the scheduled user,
while such an analysis was not conducted in other conventional analyses \cite{Borst2005,Yang2006,Choi2007,Blaszczyszyn2009,Garcia2015}.
As for the relationship between the stochastic geometry analysis and the conventional analyses, please refer to \cite{Ohto2017}.

This paper derives the SINR CCDF and the average data rate
of the typical scheduled user, selected according to the normalized SNR-based scheduling,
in an uplink cellular network with truncated channel inversion power control.
The uplink analysis is essentially different from the typical downlink analysis \cite{Andrews2011} in
that the truncated channel inversion power control per-user is performed \cite{ElSawy2014}
and that the interferers correspond to the set of users composed of at most one scheduled user assigned with the resource block in each cell other than the target cell.
Therefore, we derive the SINR CCDF under the appropriate system model.

In addition, the derived SINR CCDF is marginalized over the distribution of the number of users involved in a particular scheduler.
We would like to point out that the SINR CCDF of the typical scheduled user depends on
the number of users involved in the scheduler because it affects the multi-user diversity gain.
However, the distribution of the number of potential uplink users varies depending on the maximum transmit power, the minimum receiver sensitivity, and the density of BSs.
Note that the maximum transmit power constraint is peculiar to the uplink network, which limits the existing range of the scheduled users to around the corresponding BSs.
This paper considers two distributions of the number of users involved in the scheduling,
for the case where the scheduler selects a user from all the users in the corresponding Voronoi cell
and for the case where the scheduler does not select users near cell edges (see Figs.~\ref{fig:distribution}(\subref{fig:20}) and \ref{fig:distribution}(\subref{fig:10})).
The two distributions are shown to be asymptotically equal to the true distribution.

The contributions of this paper are as follows:
\begin{itemize}
\item
By noticing that the multi-user diversity gain depends on the number of potential uplink users,
we give the SINR CCDF conditioning on the number of potential uplink users
by the sum of Laplace transforms of the probability density function of the aggregate interference.
Note that unlike the downlink analysis \cite{Ohto2017}, some users are not assigned with the resource block due to the maximum transmit power constraint.
For some special cases, the analytical expression reduces to the closed-form expression.
\item
We derive the SINR CCDF and the scheduling gain for the two incompatible cases,
which are asymptotically meaningful in the limit of high/low density of BSs.
Moreover, for each case, we give the probability that indicates the accuracy of the corresponding analytical result
as a function of the BS density.
The analytical results enable the design of uplink cellular networks with taking into account the advantages of channel-aware user scheduling.
\end{itemize}

The rest of this paper is organized as follows.
Section~\ref{sec:system_model} describes the system model.
Section~\ref{sec:derivation} derives the SINR CCDF and the average data rate using stochastic geometry.
Section~\ref{sec:simulations} shows the simulation results, which validate the analytical expression.
Section~\ref{sec:conclusion} concludes this paper.

\section{System Model}\label{sec:system_model}
The locations of BSs and users form independent Poisson point processes (PPPs) in $\mathbb{R}^2$ with respective intensities,
$\lambda_{\text{BS}}$, $\lambda_{\text{UE}}$.
Each user is associated with the nearest BS, meaning that the cell of each BS comprises a Voronoi tessellation on the plane.
We assume that there is one resource block to assign and that each BS serves only one user in the resource block at any given time.

We assume that the desired and interference signals experience path loss with a path loss exponent $\alpha$
and quasi-static Rayleigh fading, i.e., the channel gain is constant over a time slot and is exponentially distributed with mean one.
We assume that the system employs the truncated channel inversion power control \cite{ElSawy2014}.
In this system, each user adjusts the transmit power such that the averaged received signal power at the associated BS is equal to a threshold $\rho_{\mathrm{o}}$,
and the transmit power is limited so as not to exceed a maximum value $P_{\mathrm{u}}$.
Note that some users do experience outage due to insufficient power.
The outage probability was given in \cite{ElSawy2014} by
$\mathrm{e}^{-\pi\lambda_{\text{BS}} (P_{\mathrm{u}} / \rho_{\mathrm{o}})^{2/\alpha}}$,
which means the probability of no BS within $(P_{\mathrm{u}} / \rho_{\mathrm{o}})^{1/\alpha}$,
the distance over which a radio signal transmitted at the maximum transmit power $P_{\mathrm{u}}$ decays to the threshold $\rho_{\mathrm{o}}$ on average, from a typical user.
We should note that the distribution of the number of users involved in a particular scheduling (hereinafter referred to as \textit{involved users})
in a typical cell depends on the outage probability,
and hence on $P_{\mathrm{u}}$, $\rho_{\mathrm{o}}$, and $\lambda_{\text{BS}}$,
which will be discussed in Section~\ref{sec:derivation}.

The system includes a scheduler that selects a user using the resource block.
We consider the normalized SNR-based scheduler \cite{Yang2006}, which assigns the resource block to
the user with the largest instantaneous SNR normalized by the time-averaged SNR of the user over a period when variations induced by fading effects are negligible.
Note that if the data rate is proportional to the SNR \cite{Choi2007},
the normalized SNR-based scheduler is equivalent to the proportional fair scheduler \cite{Viswanath2002}.
A user currently using the resource block in each cell is referred to as an \textit{scheduled user}.
Fig.~\ref{fig:involve_active} illustrates the definitions of involved users and scheduled users to clarify the difference between them.
Note that the authors in \cite{Ohto2017} considered the downlink analysis with the same scheduler,
while this paper discusses the uplink analysis with the truncated channel inversion power control.
We also note that users which experience outage do not belong to the involved users
and that some BSs may not have scheduled users.

Although the system model for uplink cellular networks is based on that of \cite{ElSawy2014} in many aspects,
there are two essential differences in this paper:
we consider the channel-aware user scheduling in which a transmitting user is selected according to the aforementioned manner,
while a transmitting user is randomly chosen in \cite{ElSawy2014};
we take into account the cell including no transmitting user,
while \cite{ElSawy2014} arranges the locations of users such that each BS has at least one scheduled user,
so that the density of interfering users are reduced as a result.

\begin{figure}[t!]
	\centering
  \subcaptionbox{Involved users in typical cell.}{
		\includegraphics[width=0.45\columnwidth]{./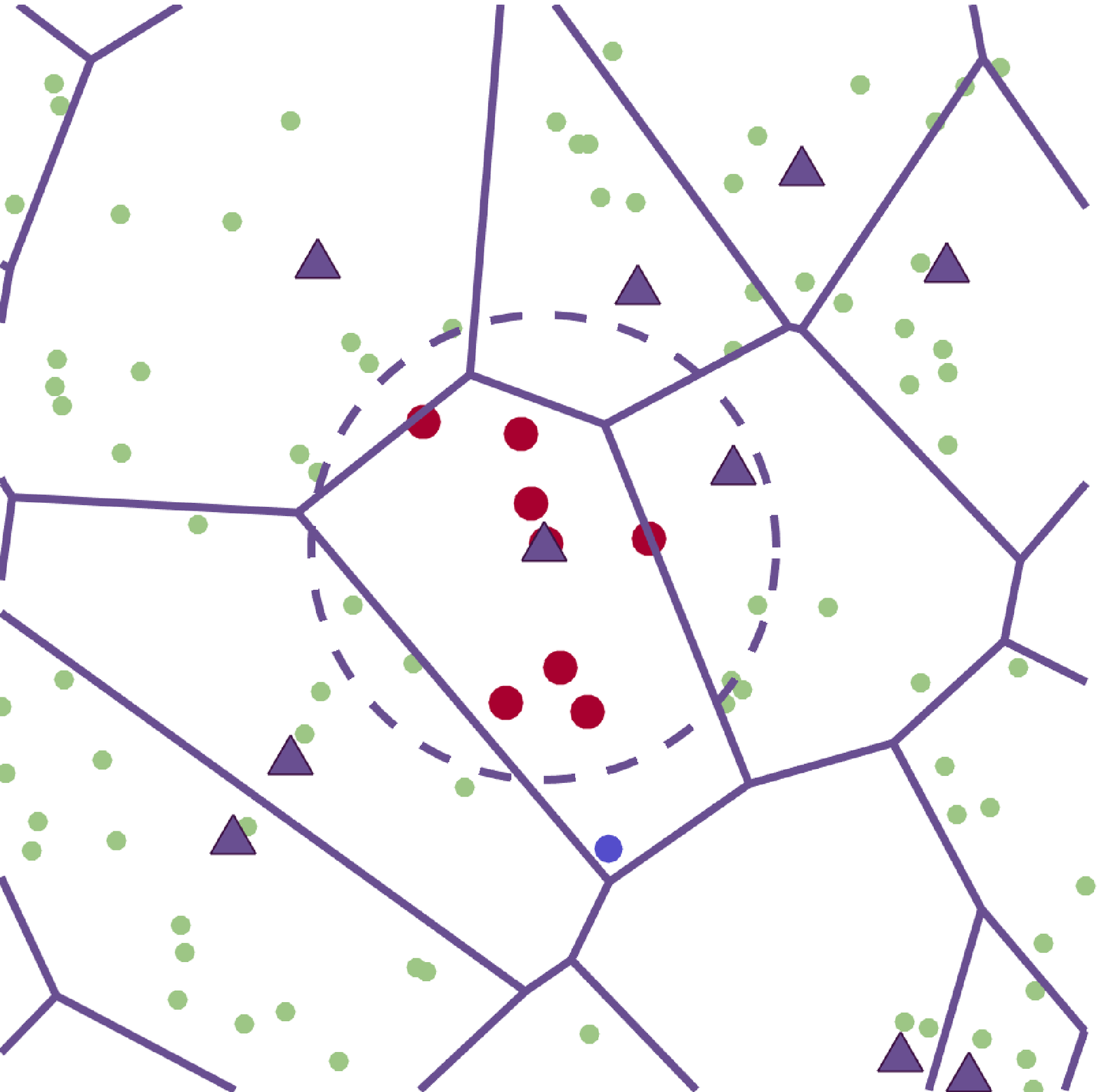}
	}
  \subcaptionbox{Scheduled users.}{
		\includegraphics[width=0.45\columnwidth]{./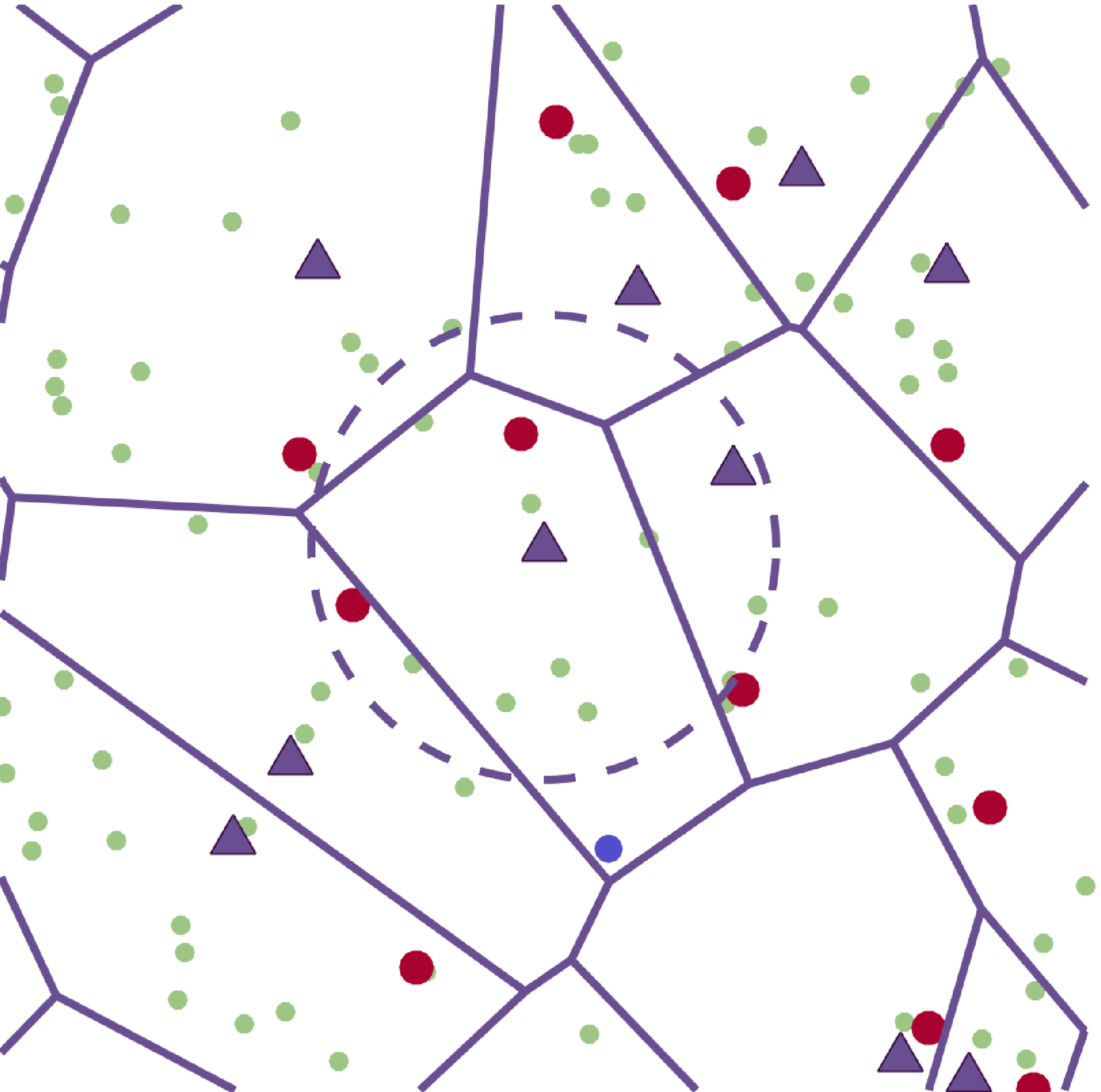}
	}
\caption{Red circles represent each type of users, dashed circle represents achievable range (see Section \ref{ssec:n_dist}),
  purple triangles represent BSs, green dots represent users, blue circles represent outage users in typical cell.}
\label{fig:involve_active}
\end{figure}

\section{SINR CCDF and Average Data Rate}\label{sec:derivation}
The objective in this section is obtaining the SINR CCDF and the average data rate for the described system.
The SINR CCDF of a typical user is the probability that the typical user achieves
some target SINR $\theta$ when the typical user is scheduled, defined as
$\bar{F}_{\mathit{SINR}}(\theta) \coloneqq \mathbb{P}(\mathit{SINR} > \theta)$.

We begin with introducing the SINR with truncated channel inversion power control.
Without loss of generality, we can assume that a typical scheduled user is located at the origin.
The SINR of the typical scheduled user is denoted by
\begin{align}
  \mathit{SINR} = \frac{ \displaystyle \max_{i=1,2,\ldots,n} h_{i} \rho_{\mathrm{o}}}{\sigma^2 + I},
\end{align}
where $h_{i}$ is the fading gain of the involved user $i$ exponentially distributed with mean one,
and $n$ represents the number of involved users in the associated cell.
The max operator reflects the fact that the scheduler selects the user with the largest fading gain.
$\rho_{\mathrm{o}}$ denotes the minimum receiver sensitivity, and $\sigma^2$ is the noise power.
As in \cite{ElSawy2014}, we assume that each user adjusts the transmit power such that the average received signal power at its serving BS is equal to $\rho_{\mathrm{o}}$.
Note that the truncated channel inversion power control is performed in advance of the scheduling,
so that the transmit power of each user does not depend on the user scheduling.
$I$ is the aggregate interference power defined as
\begin{align}
  I = \sum_{u \in \Phi_{\mathrm{iu}}} g_u p_u d_u^{-\alpha},
\end{align}
where $\Phi_{\mathrm{iu}}$ is a point process formed by the locations of interfering users. $g_u$, $p_u$, and $d_u$ represent
the fading gain, the transmit power, and the distance to the typical BS relative to interfering user $u \in \Phi_{\mathrm{iu}}$, respectively.
Note that $p_u,\ \forall u \in \Phi_{\mathrm{iu}}$ is a random variable having the following distribution \cite{ElSawy2014}
\begin{align}
\label{eq:distribution_p}
  f (x;P_{\mathrm{u}}) = \left\{ \begin{array}{ll}
    \frac{2\pi\lambda_{\text{BS}} x^{\frac{2}{\alpha} -1} \mathrm{e}^{-\pi\lambda_{\text{BS}} \left( \frac{x}{\rho_{\mathrm{o}}} \right)^{\frac{2}{\alpha}}} }
    {\alpha \rho_{\mathrm{o}}^{\frac{2}{\alpha}} \left( 1 - \mathrm{e}^{-\pi\lambda_{\text{BS}} \left( \frac{P_{\mathrm{u}}}{\rho_{\mathrm{o}}} \right)^{\frac{2}{\alpha}}} \right)  },&  0\leq x \leq P_{\mathrm{u}};\\
    0, & \text{otherwise,}
    \end{array} \right.
\end{align}
which is obtained by a transformation of variables $p = \rho_{\mathrm{o}} r^{\alpha}$ ($0 \leq p \leq P_{\mathrm{u}}$),
where $r$ is an uplink distance following the Rayleigh distribution with mean $1/2\sqrt{\lambda_{\text{BS}}}$ \cite{Andrews2011,Haenggi2012}.
Note that \cite{ElSawy2014} pointed out that the transmit powers of the scheduled users are identically distributed but are not independent.

In normalized SNR scheduling,
the fading gain of the typical scheduled user follows the distribution of the largest order statistic
for the given fading gains of the involved users in the corresponding cell \cite{Ohto2017}.
The distribution is given by
\begin{align}
\label{eq:cdf_hn}
  \mathbb{P} \left( \max_{i=1,2,\ldots,n} h_{i} \leq x \right) = (1-\mathrm{e}^{-x})^n .
\end{align}
The above cumulative distribution function (CDF) also appears in the calculation of the selection combiner output \cite{Goldsmith2005}.
Note that unlike the selection combining, $n$ is the random variable due to the inherent randomness of Poisson cellular networks.

We now state the main result of this paper.
Before deriving the SINR CCDF, we can give the general result
by exploiting the fact that the multiuser diversity gain increases with the number of involved users associated with the BS.
Letting the number of involved users be a random variable $N$, we have
\begin{align}\label{eq:general}
  \bar{F}_{\mathit{SINR}}(\theta) &= \mathbb{E}_N{\left[\, \mathbb{P}(\,\mathit{SINR} > \theta \mid n\,) \,\right]} \nonumber\\
    &= \sum_{n=0}^\infty f_N(n) \mathbb{P}(\,\mathit{SINR} > \theta \mid n\,),
\end{align}
where $\mathbb{P}(\mathit{SINR} > \theta \mid n)$ denotes the SINR CCDF conditioning on $N=n$,
and $f_N(n)$ is the PMF of the number of involved users in a typical cell.
We first derive $\mathbb{P}(\mathit{SINR} > \theta \mid n)$ (see Lemma \ref{thm:lemma}) as well as some simpler expressions in Section~\ref{ssec:sinr_ccdf_n}.
We then discuss the details of the distribution of the number of involved users in Section~\ref{ssec:n_dist}.

\subsection{SINR CCDF Conditioning on Number of Involved Users}\label{ssec:sinr_ccdf_n}
\begin{lemma}
\label{thm:lemma}
  $\mathbb{P}(\mathit{SINR} > \theta \mid n)$ is given by
	\begin{align}
	\label{eq:lemma}
		&\mathbb{P}(\mathit{SINR} > \theta \mid n) =
		\sum_{k=1}^n \binom{n}{k} (-1)^{k+1} \exp\left( -\frac{k\theta \sigma^2}{\rho_{\mathrm{o}}} \right) \nonumber\\
		&\ \times \exp\left( -  \frac{2 \theta_k \gamma\left(2,\pi \lambda_{\text{BS}} R^2 \right)
		{}_2 F_1 (1,1-\frac{2}{\alpha};2-\frac{2}{\alpha};-\theta_k)}
	  {(\alpha-2)\left( 1 - \mathrm{e}^{-\pi \lambda_{\text{BS}} R^2} \right)/ (1-f_N(0))}\right),
	\end{align}%
	where $R \coloneqq (P_{\mathrm{u}}/\rho_0)^{1/\alpha}$, $\theta_k \coloneqq k\theta$, ${}_2 F_1 (a,b;c;z)$ denotes the Gauss hypergeometric function \cite{Abramowitz1970},
	and $\gamma(a,b)$ denotes the lower incomplete gamma function $\gamma(a,b) \coloneqq \int_0^b t^{a-1} \mathrm{e}^{-t}\mathrm{d}t$.
\end{lemma}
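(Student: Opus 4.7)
The plan is to decompose the max-fading tail combinatorially, reduce the conditional CCDF to a sum of Laplace transforms of the interference $I$, and evaluate each Laplace transform via the PGFL of a PPP of interfering users. Concretely, conditional on $n$ involved users and on $I$, the event $\{\mathit{SINR}>\theta\}$ coincides with $\{\max_i h_i > \theta(\sigma^2+I)/\rho_{\mathrm{o}}\}$. Using \eqref{eq:cdf_hn} together with the identity $1-(1-\mathrm{e}^{-y})^n=\sum_{k=1}^n\binom{n}{k}(-1)^{k+1}\mathrm{e}^{-ky}$ with $y=\theta(\sigma^2+I)/\rho_{\mathrm{o}}$, and then taking expectation over $I$, yields
\[
\mathbb{P}(\mathit{SINR}>\theta\mid n)=\sum_{k=1}^{n}\binom{n}{k}(-1)^{k+1}\mathrm{e}^{-k\theta\sigma^2/\rho_{\mathrm{o}}}\,\mathcal{L}_I\!\left(\tfrac{k\theta}{\rho_{\mathrm{o}}}\right),
\]
which already reproduces the outer sum and the noise factor in \eqref{eq:lemma}; what remains is to identify $\mathcal{L}_I(k\theta/\rho_{\mathrm{o}})$ with the second exponential factor.

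Next, I would model the interfering users as a PPP of intensity $\lambda_{\text{BS}}(1-f_N(0))$, reasoning that every BS schedules at most one user and that a cell has no scheduled user exactly when it contains no involved user. For each interferer I parameterize the transmit power via $p_u=\rho_{\mathrm{o}} r_{u,0}^\alpha$, where $r_{u,0}\in[0,R]$ is its distance to its own serving BS, truncated Rayleigh by \eqref{eq:distribution_p}. Applying the PPP PGFL, using $\mathbb{E}_g[\mathrm{e}^{-sgy}]=(1+sy)^{-1}$ for $g\sim\mathrm{Exp}(1)$, and setting $s=k\theta/\rho_{\mathrm{o}}$ so that $s\rho_{\mathrm{o}}=\theta_k$, gives
\[
\mathcal{L}_I(s)=\exp\!\left(-2\pi\lambda_{\text{BS}}(1-f_N(0))\,\mathbb{E}_{r_{u,0}}\!\int_{r_{u,0}}^{\infty}\frac{\theta_k r_{u,0}^\alpha}{r^\alpha+\theta_k r_{u,0}^\alpha}\,r\,\mathrm{d}r\right),
\]
where the lower limit $r_{u,0}$ reflects nearest-BS association (the interferer cannot be closer to the typical BS than to its own BS).

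The remaining bookkeeping evaluates the two integrals. In the inner one, the substitution $t=r/r_{u,0}$ followed by $u=\theta_k/t^{\alpha}$ reduces it to $(\theta_k/\alpha)\int_0^1 w^{-2/\alpha}/(1+\theta_k w)\,\mathrm{d}w$, which via the Euler integral representation equals $\theta_k\,{}_2F_1(1,1-2/\alpha;2-2/\alpha;-\theta_k)/(\alpha-2)$, so the inner integral evaluates to $r_{u,0}^2\theta_k\,{}_2F_1(\cdots)/(\alpha-2)$. The outer expectation against the truncated Rayleigh density reduces to $\int_0^R r^3\mathrm{e}^{-\pi\lambda_{\text{BS}}r^2}\,\mathrm{d}r$, which via $t=\pi\lambda_{\text{BS}} r^2$ gives $\gamma(2,\pi\lambda_{\text{BS}}R^2)/(2\pi^2\lambda_{\text{BS}}^2)$; collecting the remaining constants reproduces the exponent in \eqref{eq:lemma}. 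The hardest step is the interference modeling in the previous paragraph: the scheduled users across cells are not formally independent of the underlying BS process, so the product form of the PGFL and the restriction of the spatial domain to $[r_{u,0},\infty)$ both rely on standard independent-thinning and nearest-BS approximations; relaxing the lower limit to $0$ would collapse ${}_2F_1$ into a $\pi/\sin(2\pi/\alpha)$ closed form that would not match \eqref{eq:lemma}, so this restriction must be enforced to obtain the stated expression.
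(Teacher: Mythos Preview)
Your proposal is correct and follows essentially the same route as the paper: binomial expansion of the max-fading tail to reduce to a sum of Laplace transforms, then evaluation of $\mathcal{L}_I$ via the PGFL of a PPP of interferers with intensity $(1-f_N(0))\lambda_{\text{BS}}$, with the integration domain restricted by nearest-BS association. The only cosmetic difference is that the paper writes the Laplace transform in the factored form $s^{2/\alpha}\mathbb{E}_p[p^{2/\alpha}]\int_{(s\rho_{\mathrm{o}})^{-1/\alpha}}^\infty y/(y^\alpha+1)\,\mathrm{d}y$ and imports $\mathbb{E}_p[p^{2/\alpha}]$ from \cite{ElSawy2014}, whereas you parameterize by $r_{u,0}$ and compute both integrals directly; after the change of variables these are the same calculation.
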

\begin{proof}
We have
\begin{align}
\label{eq:ccdf_conditioned_n}
  &\mathbb{P}(\mathit{SINR} > \theta \mid n) = \mathbb{E}_I{\left[\, \mathbb{P} \left( \max_{i=1,2,\ldots,n} h_{i}  > \frac{\theta(\sigma^2 + I)}{\rho_{\mathrm{o}}}  \right) \relmiddle| I \,\right]} \nonumber\\
    &= \mathbb{E}_{I}{\left[1- \left(1- \exp\left( - \frac{\theta(\sigma^2 + I)}{\rho_{\mathrm{o}}}\right) \right)^n \right]} \nonumber\\
    &= \sum_{k=1}^n \binom{n}{k} (-1)^{k+1} \exp\left( -\frac{k\theta \sigma^2}{\rho_{\mathrm{o}}} \right) \mathbb{E}_I{\left[\exp\left( -\frac{k\theta I}{\rho_{\mathrm{o}}} \right)\right]} \nonumber\\
    &= \sum_{k=1}^n \binom{n}{k} (-1)^{k+1} \exp\left( -\frac{k\theta \sigma^2}{\rho_{\mathrm{o}}} \right) \mathcal{L}_I{\left( \frac{k\theta}{\rho_{\mathrm{o}}} \right)},
\end{align}
where $\mathcal{L}_I(s)$ denotes the Laplace transform of the probability density function
of the aggregate interference power.
Note that $\mathbb{P}(\mathit{SINR} > \theta \mid n)$ is given based on the expectation of the sum of exponential functions of interference $I$
because the CDF of the fading gain of the scheduled user \eqref{eq:cdf_hn} can be written as a sum of exponential functions from the binomial theorem.
Although this property is true for the downlink analysis \cite{Ohto2017}, the part regarding the Laplace transform is different
due to the power control and the difference in interference sources.

We can derive $\mathcal{L}_I (k\theta / \rho_{\mathrm{o}})$ in the same way in \cite{ElSawy2014}, i.e., by using two approximations:
the scheduled users form a PPP and their transmit powers are independent.
Note that these approximations are made for the analytical tractability \cite{ElSawy2014}.
Considering that the scheduled users constitute a homogeneous PPP with intensity $\lambda^{(n>0)}_{\text{BS}} \coloneqq (1-f_N(0)) \lambda_{\text{BS}}$,
and that their transmit powers are independent random values, we obtain
\begin{align}
\label{eq:laplace1}
  \mathcal{L}_I(s) = \exp\left( - 2 \pi \lambda^{(n>0)}_{\text{BS}} s^{2/\alpha} \mathbb{E}_p\left[ p^{2/\alpha}\right] \int^\infty_{(s\rho_{\mathrm{o}})^{\frac{-1}{\alpha}}} \frac{y}{y^\alpha + 1} \,\mathrm{d}y \right),
\end{align}
where $\mathbb{E}_p [ p^{2/\alpha} ]$ is obtained in \cite{ElSawy2014} with the distribution of transmit power \eqref{eq:distribution_p} as
\begin{align}
  \mathbb{E}_p\left[ p^{2/\alpha}\right] = \frac{\rho_{\mathrm{o}}^{2/\alpha} \gamma\left(2,\pi \lambda_{\text{BS}} R^2 \right)}{ \pi \lambda_{\text{BS}} \left(1 - \mathrm{e}^{-\pi \lambda_{\text{BS}} R^2 }\right) }.
\end{align}
While the accuracy of these approximations is validated from \cite{ElSawy2014} for the case without channel-aware scheduling,
it is also validated in Section~\ref{sec:simulations} even if the channel-aware scheduling is employed.
It should be noted that compared to the existing analysis \cite{ElSawy2014}, the density of interfering users that appears
when applying the probability generating functional
reduces to $(1-f_N(0))\lambda_{\text{BS}}$ due to the absence of users in some cells.

Substituting $s = k\theta /\rho_{\mathrm{o}}$ into \eqref{eq:laplace1} yields
\begin{align}
\label{eq:laplace2}
  &\mathcal{L}_I \left( \frac{k\theta}{\rho_{\mathrm{o}}} \right) =
    \exp\left( -  \frac{2 \theta_k^{2/\alpha} \gamma\left(2,\pi \lambda_{\text{BS}} R^2 \right) \int^\infty_{\theta_k^{\frac{-1}{\alpha}}} \frac{y}{y^\alpha + 1} \,\mathrm{d}y }
    { \left(1 - \mathrm{e}^{-\pi \lambda_{\text{BS}} R^2} \right) / (1-f_N(0)) } \right) \nonumber \\
  &=\exp\left( -  \frac{2 \theta_k \gamma\left(2,\pi \lambda_{\text{BS}} R^2 \right) {}_2 F_1 (1,1-\frac{2}{\alpha};2-\frac{2}{\alpha};-\theta_k)}
    {(\alpha-2)\left( 1 - \mathrm{e}^{-\pi \lambda_{\text{BS}} R^2} \right) / (1-f_N(0))}\right).
\end{align}
Finally, substituting \eqref{eq:laplace2} into \eqref{eq:ccdf_conditioned_n} yields \eqref{eq:lemma}.
\end{proof}

Simpler expressions can be derived for some special cases.
For the interference-limited case, i.e., $\sigma^2 + I \simeq I$, $\exp( -k\theta \sigma^2/\rho_{\mathrm{o}}) \mathcal{L}_I (k\theta/\rho_{\mathrm{o}})$
reduces to $\mathcal{L}_I (k\theta/\rho_{\mathrm{o}})$.
When $\alpha = 4$, the integral part reduces to the closed-form expression, $\arctan(\sqrt{k\theta})/2$.
In the case of $P_{\mathrm{u}} \to \infty$ (equivalently $R \to \infty$), $\gamma(2,\pi\lambda_{\text{BS}}R^2)$ reduces to $1$ and $1 - \mathrm{e}^{-\pi\lambda_{\text{BS}}R^2} \to 1$.
The simplest form is obtained when the above three conditions are simultaneously satisfied, which leads to
\begin{align}
  &\mathbb{P}(\mathit{SINR}>\theta \mid n) \nonumber \\
	&\ = \sum_{k=1}^{n} \binom{n}{k} (-1)^{k+1} \exp\left( -  (1-f_N(0)) \sqrt{\theta_k} \arctan \sqrt{\theta_k} \right).
\end{align}
Note that the above expression is a closed form and only depends on a threshold $\theta$,
and substituting $n=1$ yields the SINR CCDF without the channel-aware scheduling, which was obtained in \cite{ElSawy2014}.

\subsection{Distribution of the Number of Involved Users}\label{ssec:n_dist}
The derivation of the SINR CCDF is complete if we have the distribution of the number of involved users in a typical cell, $f_N(n)$.
In fact, the distribution depends on the existing range of a scheduled user.
To discuss the existing range, we define the achievable region of the BS
as the circle centered in the typical BS with radius $R=(P_{\mathrm{u}}/\rho_{\mathrm{o}})^{1/\alpha}$.
A user outside the achievable range of the serving BS
is not admitted to transmit frames due to the maximum power constraint.
In this paper, we give the distribution for the two incompatible cases,
where the achievable range includes the Voronoi cell and where the Voronoi cell includes the achievable range,
shown as Fig.~\ref{fig:distribution}(\subref{fig:20}) and Fig.~\ref{fig:distribution}(\subref{fig:10}), respectively.

\begin{figure}[t!]
	\centering
	\subcaptionbox{Voronoi cell. \label{fig:20}}{
		\includegraphics[width=0.29\columnwidth]{./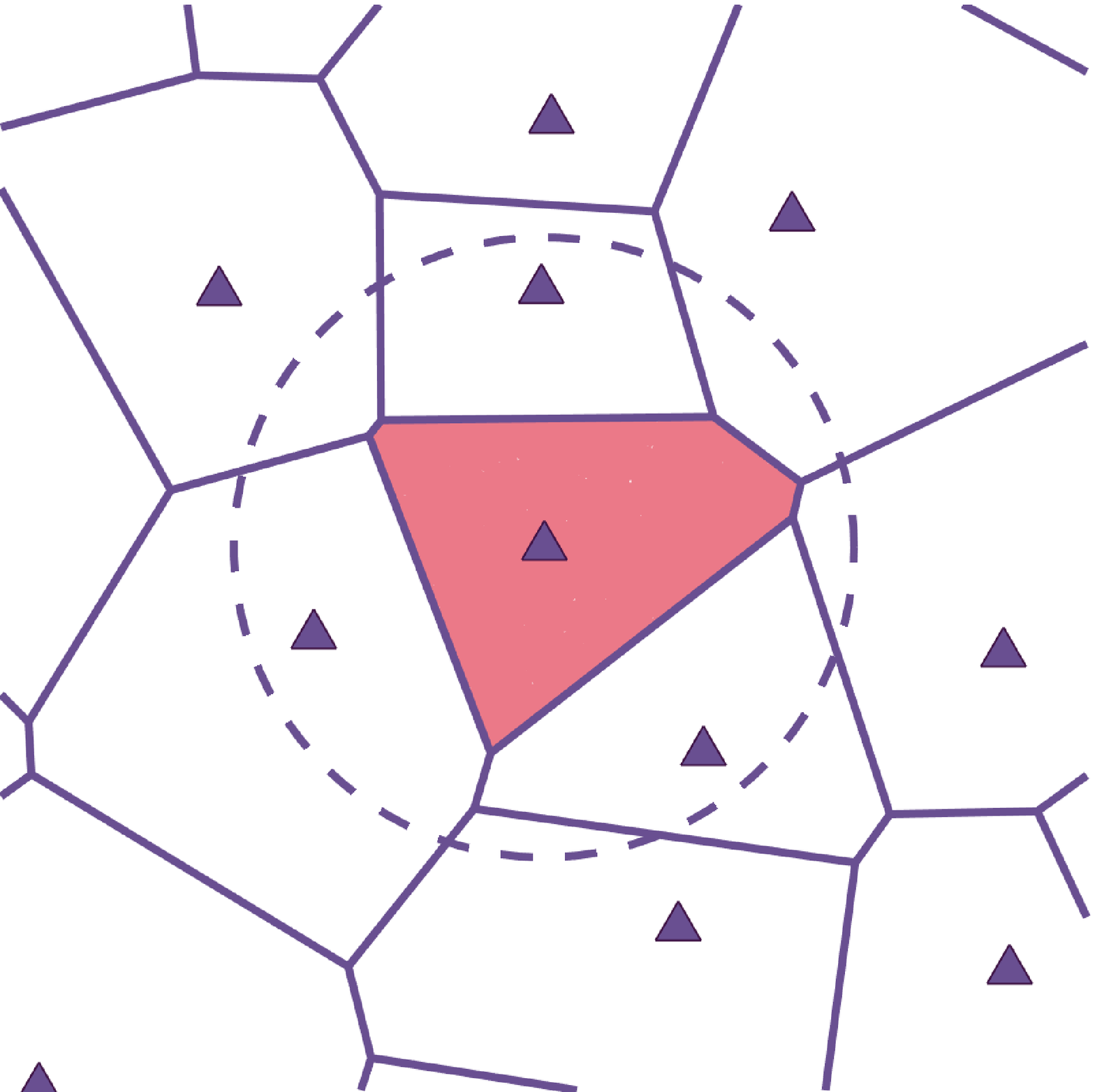}
	}
	\subcaptionbox{Partial overlap. \label{fig:15}}{
		\includegraphics[width=0.29\columnwidth]{./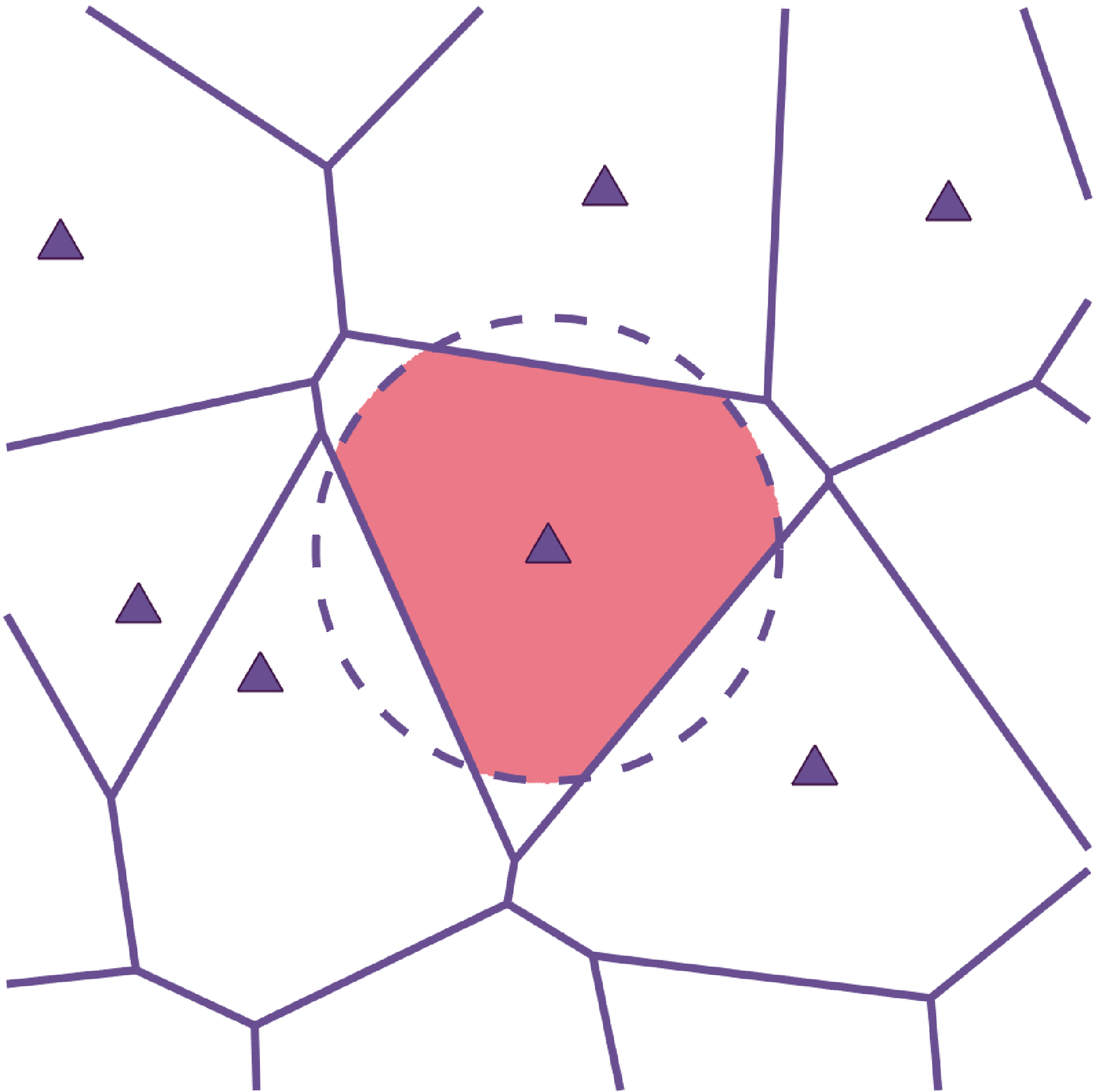}
	}
  \subcaptionbox{Achievable range. \label{fig:10}}{
		\includegraphics[width=0.29\columnwidth]{./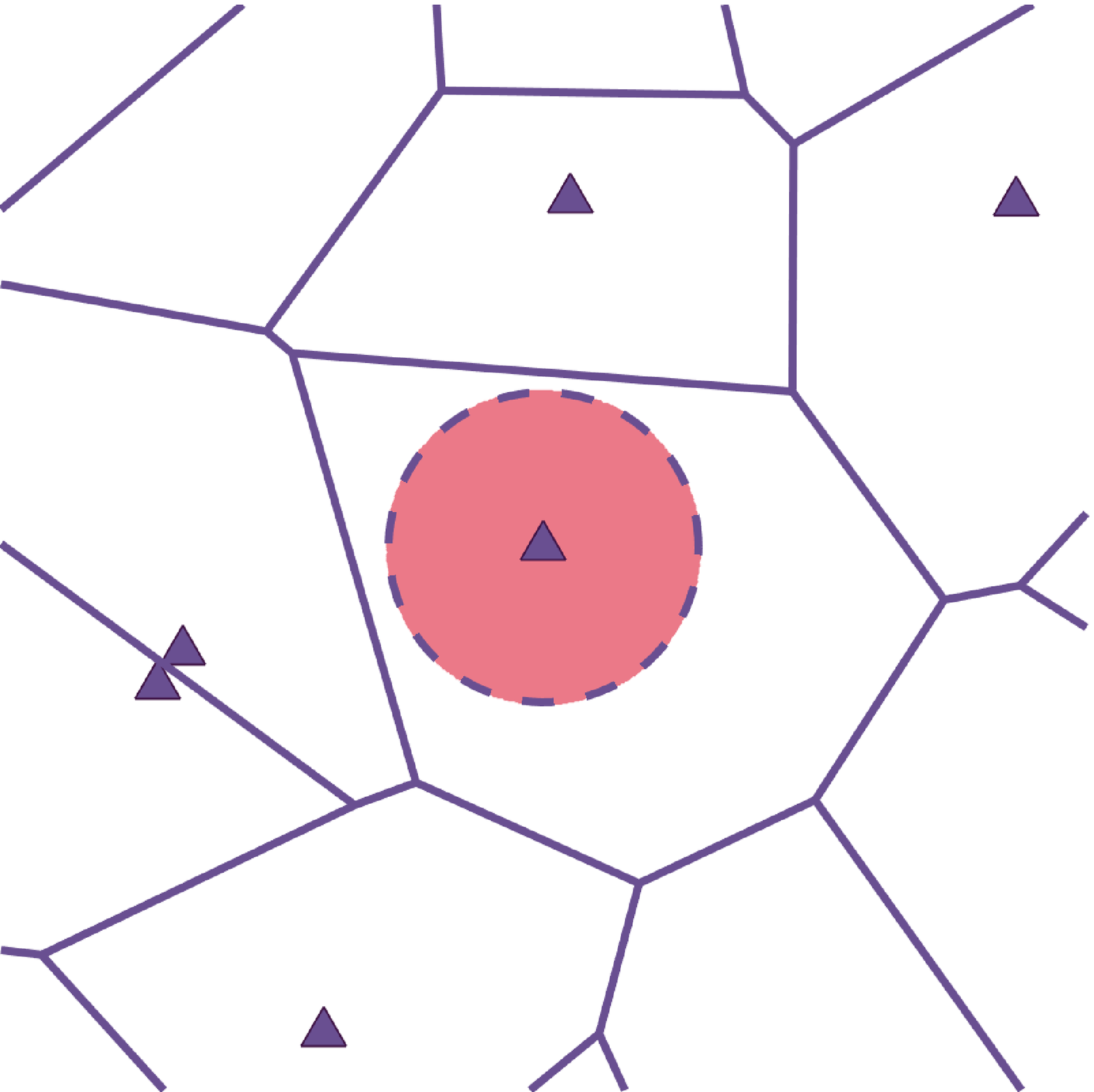}
	}
\caption{Existing range of scheduled user; red region represents existing range, dashed circle represents achievable range,
  purple triangles represent BSs.}
\label{fig:distribution}
\end{figure}
\begin{figure}[t!]
	\centering
	\includegraphics[width=0.95\columnwidth]{./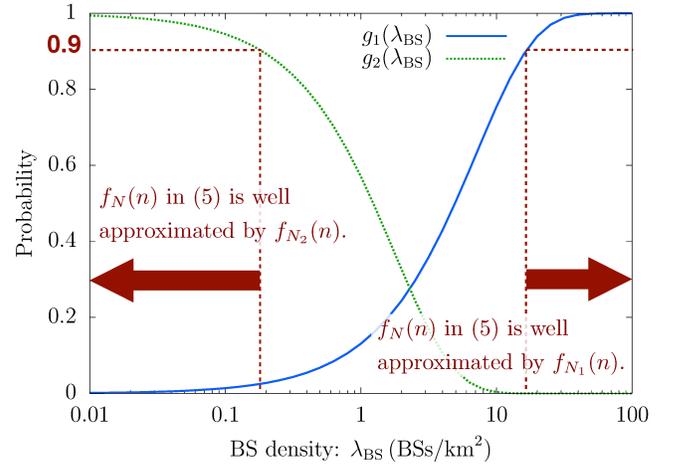}
	\caption{Probabilities indicating validity of distributions
	when $P_{\mathrm{u}} = 23\,\text{dBm}$, $\rho_{\mathrm{o}} = -70\,\text{dBm}$, and $\alpha =4$.}
	\label{fig:distribution_range}
\end{figure}

\setcounter{equation}{15}
\begin{figure*}[!t]
\vspace{-7mm}
\begin{align}
\label{eq:sinr_ccdf}
	\bar{F}_{\mathit{SINR}}(\theta) = \sum_{n=0}^\infty f_{N_i}(n) \sum_{k=1}^n \binom{n}{k} (-1)^{k+1} \exp\left( -\frac{k\theta \sigma^2}{\rho_{\mathrm{o}}}
		-  \frac{2 \theta_k \gamma\left(2,\pi \lambda_{\text{BS}} R^2 \right) {}_2 F_1 (1,1-\frac{2}{\alpha};2-\frac{2}{\alpha};-\theta_k)}
			{(\alpha-2)\left( 1 - \mathrm{e}^{-\pi \lambda_{\text{BS}} R^2} \right) / (1-f_{N_i}(0))}\right),\ i \in \{1,2\}
\end{align}
\vspace{-2mm}
\hrulefill
\end{figure*}
\setcounter{equation}{11}

In the case where the achievable range includes the Voronoi cell,
i.e., where a scheduled user exists anywhere in the Voronoi cell,
the distribution of the number of involved users is approximately given by
\begin{align}
\label{eq:distribution1}
  f_{N_1}(n) = \frac{(\lambda_{\text{UE}} / c \lambda_{\text{BS}})^n}{ c B(n+1,c-1) (\lambda_{\text{UE}} / c\lambda_{\text{BS}} + 1)^{n+c}},
\end{align}
which is the PMF of the number of users in a typical Voronoi cell \cite{Yu2013},
where $B(\alpha,\beta)$ is the beta function and $c=3.5$.
The probability that a typical Voronoi cell is included in the corresponding achievable range,
denoted by $g_{1} (\lambda_{\text{BS}})$, is equivalent to the probability that at least one BS exists within $R$ from a typical user,
yielding
\begin{align}\label{eq:g1}
	g_{1}(\lambda_{\text{BS}}) = 1 - \mathrm{e}^{-\pi\lambda_{\text{BS}} R^2}.
\end{align}
The distribution of the number of the involved users in the typical cell $f_N(n)$ in \eqref{eq:general} is well approximated by $f_{N_1}(n)$
if $\lambda_{\text{BS}}$ is sufficiently large
so that $g_{1}(\lambda_{\text{BS}}) \simeq 1$.

In the case where the Voronoi cell includes the achievable range,
i.e., where users at cell edges are not scheduled,
the distribution of the number of involved users is given by
\begin{align}
\label{eq:distribution2}
  f_{N_2}(n)= \frac{(\lambda_{\text{UE}} \pi R^2)^n}{n!} \mathrm{e}^{- \lambda_{\text{UE}} \pi R^2},
\end{align}
which is the PMF of the number of users forming the PPP with intensity $\lambda_{\text{UE}}$
in the range with area $\pi R^2$, obtained from the definition of the PPP \cite{Haenggi2012}.
The probability that a typical Voronoi cell includes the corresponding achievable range, denoted by $g_2(\lambda_{\text{BS}})$,
is equivalent to the probability that no cell edges exist within $R$ from the BS.
We can find that the probability is equivalent to the probability that no other BS exists within the circle centered at the typical BS with radius $2R$,
yielding
\begin{align}\label{eq:g2}
	g_2(\lambda_{\text{BS}}) = \mathrm{e}^{-4\pi\lambda_{\text{BS}} R^2}.
\end{align}
$f_N(n)$ in \eqref{eq:general} is well approximated by $f_{N_2}(n)$
if $\lambda_{\text{BS}}$ is sufficiently small so that $g_2(\lambda_{\text{BS}}) \simeq 1$.

Two probabilities indicating the validity of the distributions are shown in Fig.~\ref{fig:distribution_range}.
We show the range of the BS density in which either probability exceeds $0.9$, which means
more than 90\% of the Voronoi cells includes or is included by the achievable range,
and therefore that either $f_{N_1}(n)$ or $f_{N_2}(n)$ would approach the true distribution $f_N(n)$.
In the simulation part, we show that the Monte Carlo simulation result is well approximated by the analytical result
for the BS density at which the corresponding probability shown in Fig.~\ref{fig:distribution_range} is approximately equal to $0.9$.
Obtaining the distribution for the case where the achievable range and the Voronoi cell partially overlap, shown as Fig.~\ref{fig:distribution}(\subref{fig:15}),
will be our future work.

\subsection{Complete SINR CCDF and Average Data Rate}
Substituting \eqref{eq:ccdf_conditioned_n} and \eqref{eq:distribution1} or \eqref{eq:distribution2} into \eqref{eq:general},
we obtain \eqref{eq:sinr_ccdf} and the following proposition.
\begin{theorem}
  The SINR CCDF $\bar{F}_{\mathit{SINR}}(\theta)$ of the normalized SNR-based scheduling in uplink cellular networks is given by \eqref{eq:sinr_ccdf},
	which is appropriate if $g_i(\lambda_{\text{BS}}) \simeq 1$ ($i=1,2$).
\end{theorem}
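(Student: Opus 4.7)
The plan is straightforward assembly: the Proposition is a direct consequence of the tower identity already stated as \eqref{eq:general}, combined with Lemma~\ref{thm:lemma} and the two candidate PMFs introduced in Section~\ref{ssec:n_dist}. First I would write \eqref{eq:general} with the unknown distribution $f_N(n)$, then plug \eqref{eq:lemma} into the inner probability $\mathbb{P}(\mathit{SINR}>\theta\mid n)$. The result, with $f_N$ left abstract, is the exact expression. The Proposition is then the statement that replacing $f_N$ by $f_{N_i}$ yields \eqref{eq:sinr_ccdf}, and that this replacement is accurate whenever $g_i(\lambda_{\text{BS}})\simeq 1$.

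Next I would justify the two asymptotic regimes. For $i=1$, when $g_1(\lambda_{\text{BS}})=1-\mathrm{e}^{-\pi\lambda_{\text{BS}}R^2}\simeq 1$ (equivalently, the BS density is large), every point of a typical Voronoi cell lies within the achievable disk of radius $R$ centred at its BS almost surely. Therefore no user in the cell is excluded by the truncation constraint, so the involved users coincide with the users of the Voronoi cell, and $f_N(n)=f_{N_1}(n)$ follows from the Voronoi-cell count PMF of \cite{Yu2013}. For $i=2$, when $g_2(\lambda_{\text{BS}})=\mathrm{e}^{-4\pi\lambda_{\text{BS}}R^2}\simeq 1$ (small BS density), no competing BS lies within $2R$ of the typical BS almost surely, hence the achievable disk lies strictly inside the Voronoi cell. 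The involved users are then exactly the users of the original PPP that fall inside a disk of area $\pi R^2$, so $f_N(n)=f_{N_2}(n)$ by the PPP void-count formula. In both cases $f_N(0)$ consistently equals $f_{N_i}(0)$, so the thinned interfering-BS density $(1-f_N(0))\lambda_{\text{BS}}$ appearing inside the Laplace transform of Lemma~\ref{thm:lemma} is correctly reproduced.

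Finally, I would comment on the error: outside the asymptotic regimes, the ``partial overlap'' configuration of Fig.~\ref{fig:distribution}(\subref{fig:15}) occurs with probability at most $1-g_i(\lambda_{\text{BS}})$, so the approximation error in $\bar F_{\mathit{SINR}}(\theta)$ is bounded by a quantity proportional to $1-g_i(\lambda_{\text{BS}})$ (since the inner probability is bounded by one). This is what ``appropriate if $g_i(\lambda_{\text{BS}})\simeq 1$'' means, and it is consistent with the numerical validation deferred to Section~\ref{sec:simulations}. The main obstacle, strictly speaking, is not algebraic but conceptual: one must be careful that substituting $f_{N_i}$ for $f_N$ does not invalidate the PPP and independent-transmit-power approximations inherited from Lemma~\ref{thm:lemma}; I would note that these approximations were already made at the level of the interference Laplace transform \eqref{eq:laplace1} and are independent of which regime controls the distribution of $N$, so they carry through verbatim into \eqref{eq:sinr_ccdf}.
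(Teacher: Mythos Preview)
Your proposal is correct and follows the paper's approach: the Proposition is obtained simply by substituting Lemma~\ref{thm:lemma} (equivalently \eqref{eq:ccdf_conditioned_n}) and the PMF \eqref{eq:distribution1} or \eqref{eq:distribution2} into the tower identity \eqref{eq:general}. Your additional paragraphs on the asymptotic validity of each $f_{N_i}$ and the error bound go beyond what the paper states, but they are consistent with the discussion in Section~\ref{ssec:n_dist} and do not change the argument.
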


Using the SINR CCDF, we can obtain the average data rate $\tau_{s}(\lambda_{\text{BS}},\lambda_{\text{UE}}) \coloneqq \mathbb{E}[\ln(1+\mathit{SINR})]$
and, therefore, the scheduling gain \cite{Berggren2004} $G(\lambda_{\text{BS}},\lambda_{\text{UE}}) \coloneqq \tau_{s}(\lambda_{\text{BS}},\lambda_{\text{UE}})/ \tau_{r}(\lambda_{\text{BS}})$, where
$\tau_{r}(\lambda_{\text{BS}})$ is the average data rate of round-robin scheduling given as
\setcounter{equation}{16}
\begin{align}
\label{eq:rate_rr}
  \tau_{r}(\lambda_{\text{BS}}) = (1-f_N(0)) \int_0^\infty \frac{\mathrm{e}^{-\frac{x \sigma^2}{\rho_{\mathrm{o}}}}}{x+1} \mathcal{L}_I \left( \frac{x}{\rho_{\mathrm{o}}} \right) \mathrm{d}x,
\end{align}
where the rate is assumed to be zero for cells where there are no scheduled users.
Note that \eqref{eq:rate_rr} is different from that of \cite{ElSawy2014}
in that we consider BSs having no users to serve.

\begin{corollary}
  The average data rate of the normalized SNR-based scheduling is given by \eqref{eq:rate}.
\end{corollary}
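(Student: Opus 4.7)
The plan is to apply the standard identity
\begin{equation*}
\mathbb{E}[\ln(1+X)] = \int_0^\infty \frac{\mathbb{P}(X>x)}{1+x}\,\mathrm{d}x,
\end{equation*}
which holds for any non-negative random variable $X$ and follows from writing $\ln(1+X) = \int_0^X (1+t)^{-1}\,\mathrm{d}t$ and invoking Fubini's theorem. Applied to $X=\mathit{SINR}$ under the convention, stated after \eqref{eq:rate_rr}, that the rate is zero in cells without a scheduled user, this immediately reduces the definition $\tau_s(\lambda_{\text{BS}},\lambda_{\text{UE}}) = \mathbb{E}[\ln(1+\mathit{SINR})]$ to the single integral $\int_0^\infty \bar{F}_{\mathit{SINR}}(x)/(1+x)\,\mathrm{d}x$.

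Next I would substitute the SINR CCDF from the Proposition and interchange the $x$-integration with the double sum over $n$ and $k$. The exchange is legitimate by Fubini--Tonelli: for each fixed $(n,k)$ the noise factor $\exp(-kx\sigma^2/\rho_{\mathrm{o}})$ renders the integrand absolutely integrable in $x$, the inner binomial sum is a finite sum over $k=1,\ldots,n$, and the outer series in $n$ is an expectation against the probability mass function $f_{N_i}$. Note that the $n=0$ term drops out automatically because its inner sum is empty, which is precisely why no explicit $(1-f_{N_i}(0))$ prefactor (such as the one appearing in \eqref{eq:rate_rr}) is needed outside the summation here.

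What remains is pure bookkeeping: one copies the exponent from Lemma~\ref{thm:lemma} with $\theta$ replaced by the dummy variable $x$, multiplies by $(1+x)^{-1}$, and integrates, arriving at the displayed formula \eqref{eq:rate}; the scheduling gain $G(\lambda_{\text{BS}},\lambda_{\text{UE}}) = \tau_s/\tau_r$ then follows by taking the ratio with \eqref{eq:rate_rr}. There is no genuine obstacle in the argument; the only point deserving care is verifying the Fubini--Tonelli hypothesis, which is immediate from the exponential decay supplied by the noise term (and, in the interference-limited special case noted after Lemma~\ref{thm:lemma}, from the Laplace-transform factor itself, whose exponent grows like $x^{2/\alpha}$ for large $x$).
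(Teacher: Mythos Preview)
Your proposal is correct and follows essentially the same route as the paper: the paper applies the tail formula $\mathbb{E}[Y]=\int_0^\infty \mathbb{P}(Y>t)\,\mathrm{d}t$ to $Y=\ln(1+\mathit{SINR})$, substitutes the CCDF from \eqref{eq:ccdf_conditioned_n}, swaps integral and sum, and only at the end performs the change of variables $x=\mathrm{e}^t-1$ that produces the $(1+x)^{-1}$ weight, whereas you start from the equivalent identity with that substitution already built in. Your explicit Fubini--Tonelli justification and the remark about the $n=0$ term vanishing are extra care the paper omits.
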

\begin{proof}
We have
\begin{align}
\label{eq:rate}
  &\mathbb{E}[\ln(1+\mathit{SINR})] = \int_0^\infty \mathbb{P}(\ln(1+\mathit{SINR})> t) \,\mathrm{d}t \nonumber\\
	&=\int_0^\infty \mathbb{P}(\mathit{SINR} > \mathrm{e}^t - 1) \,\mathrm{d}t = \int_0^\infty \bar{F}_{\mathit{SINR}}(\mathrm{e}^t - 1) \,\mathrm{d}t \nonumber\\
  &= \int_0^\infty \mathbb{E}_N{\left[ \sum_{k=1}^n \binom{n}{k} (-1)^{k+1} \right.}\nonumber\\
  &\ \ \ \ \ \ \left. \times  \exp\left( -\frac{k(\mathrm{e}^t-1) \sigma^2}{\rho_{\mathrm{o}}} \right) \mathcal{L}_I \left( \frac{k(\mathrm{e}^t-1)}{\rho_{\mathrm{o}}} \right) \mathrm{d}t \right]\nonumber\\
  &= \mathbb{E}_N{\left[ \sum_{k=1}^n \binom{n}{k} (-1)^{k+1} \right.}\nonumber\\
  &\ \ \ \ \ \ \left. \times \int_0^\infty \exp\left( -\frac{k(\mathrm{e}^t-1) \sigma^2}{\rho_{\mathrm{o}}} \right) \mathcal{L}_I \left( \frac{k(\mathrm{e}^t-1)}{\rho_{\mathrm{o}}} \right) \mathrm{d}t \right]\nonumber\\
  &= \mathbb{E}_N{\left[ \sum_{k=1}^n \binom{n}{k} (-1)^{k+1} \int_0^\infty \frac{\mathrm{e}^{-\frac{kx \sigma^2}{\rho_{\mathrm{o}}}}}{x+1} \mathcal{L}_I \left( \frac{kx}{\rho_{\mathrm{o}}} \right) \mathrm{d}x \right]}.
\end{align}
\end{proof}

\begin{figure}[t!]
	\centering
  \subcaptionbox{$\lambda_{\text{BS}}=0.2\,\text{BSs/km}^2$. \label{fig:sim1}}{
		\includegraphics[width=0.95\columnwidth]{./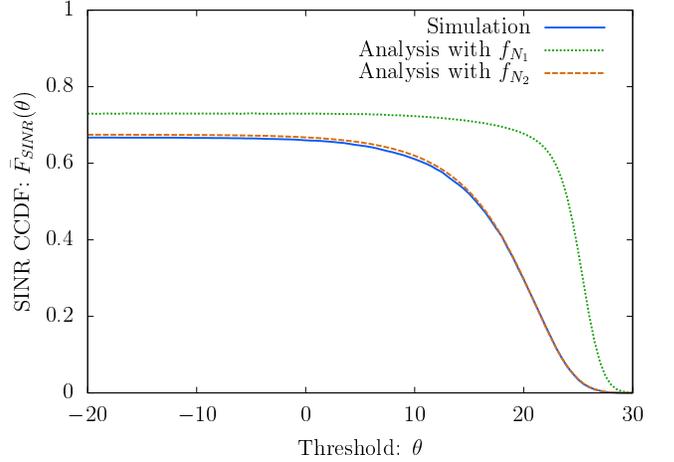}
	}
	\subcaptionbox{$\lambda_{\text{BS}}=2\,\text{BSs/km}^2$. \label{fig:sim3}}{
		\includegraphics[width=0.95\columnwidth]{./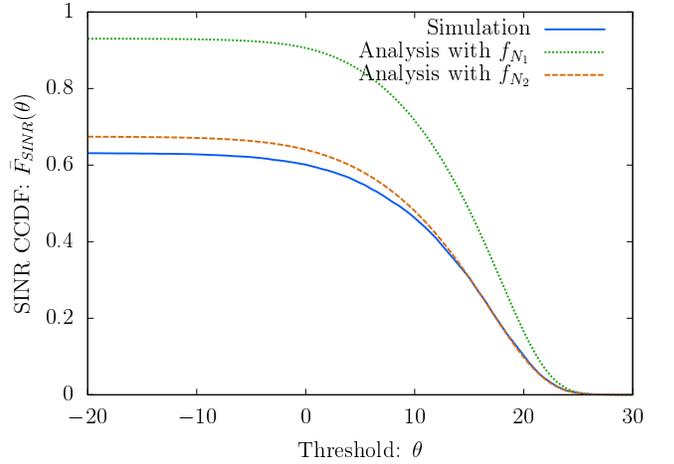}
	}
	\subcaptionbox{$\lambda_{\text{BS}}=20\,\text{BSs/km}^2$. \label{fig:sim2}}{
		\includegraphics[width=0.95\columnwidth]{./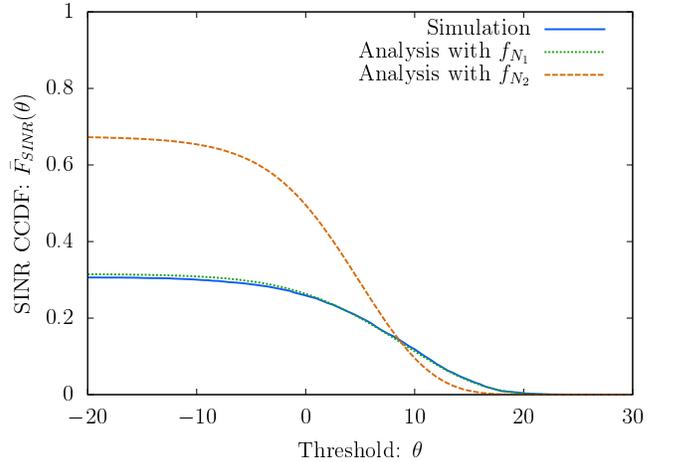}
	}
\caption{SINR CCDF $\bar{F}_{\mathit{SINR}}(\theta)$.}
\label{fig:sims}
\end{figure}

Letting $\sigma^2=0$, $P_{\mathrm{u}} \to \infty$, $\alpha=4$ yields the simplest form:
\begin{align}
  &\mathbb{E}[\ln(1+\mathit{SINR})] = \nonumber \\
  &\ \mathbb{E}_N{\left[ \sum_{k=1}^{n} \binom{n}{k} (-1)^{k+1} \int_0^\infty
    \frac{ \mathrm{e}^{- (1-f_{N_i}(0)) \sqrt{kx} \arctan(\sqrt{kx}) } \,\mathrm{d}x }{x+1} \right]}.
\end{align}

\section{Validation Through Simulations}\label{sec:simulations}
To validate the analytical result, we investigate the SINR CCDF through simulations.
Unless otherwise specified, we set 
the user density $\lambda_{\text{UE}}=8\,\text{BSs/km}^2$, the path loss exponent $\alpha=4$,
the noise power $\sigma^2 = -90\,\text{dBm}$, the maximum transmit power $P_{\mathrm{u}} = 23\,\text{dBm}$,
and the minimum receiver sensitivity of BSs $\rho_{\mathrm{o}} = -70\,\text{dBm}$.
Each simulation is repeated 10,000 times, and the corresponding result shows the average value.

Figs.~\ref{fig:sims}(\subref{fig:sim1}), \ref{fig:sims}(\subref{fig:sim3}), and \ref{fig:sims}(\subref{fig:sim2}) show the SINR CCDFs
for the cases of $\lambda_{\text{BS}}=0.2,\ 2,\ 20\,\text{BSs/km}^2$, respectively.
In all the cases, we also show two analytical results \eqref{eq:sinr_ccdf} for $i=1,2$.
We can see that the simulation results coincide with the analytical result averaged over $f_{N_2}(n)$ for the case of $\lambda_{\text{BS}}=0.2\,\text{BSs/km}^2$
and with the result averaged over $f_{N_1}(n)$ for the case of $\lambda_{\text{BS}}=20\,\text{BSs/km}^2$.
We note that $g_1(20\,\text{BSs/km}^2) = 0.940$ and $g_2(0.2\,\text{BSs/km}^2) = 0.894$,
so that the analytical result with $f_{N_1}(n)$ and that with $f_{N_2}(n)$ are valid when
$\lambda_{\text{BS}}=20\,\text{BSs/km}^2$ and $\lambda_{\text{BS}}=0.2\,\text{BSs/km}^2$, respectively.
On the other hand, for the case of $\lambda_{\text{BS}}=2\,\text{BSs/km}^2$, neither of the two analytical results coincides with the simulation result.
Obtaining $g_1(2\,\text{BSs/km}^2) = 0.245$ and $g_2(2\,\text{BSs/km}^2) = 0.325$ implies the analytical result using either $f_{N_1}(n)$ or $f_{N_2}(n)$ is not valid
for this case.

Note that the reason why the supremum is less than one is
\begin{align*}
  \MoveEqLeft \lim_{\theta \to -\infty} \mathbb{P}(\mathit{SINR}>\theta) = \lim_{\theta \to -\infty} \sum_{n=1}^\infty f_{N_i} (n) \mathbb{P}(\mathit{SINR}>\theta \mid n)\\
    &= \sum_{n=1}^\infty f_{N_i}(n) = 1 - f_{N_i}(0).
\end{align*}
The supremum approaches to one if we set the densities $\lambda_{\text{BS}}$ and/or $\lambda_{\text{UE}}$ such that $f_{N}(0) \simeq 0$
according to \eqref{eq:distribution1} or \eqref{eq:distribution2}.

\begin{figure}[t!]
	\centering
	\includegraphics[width=0.95\columnwidth]{./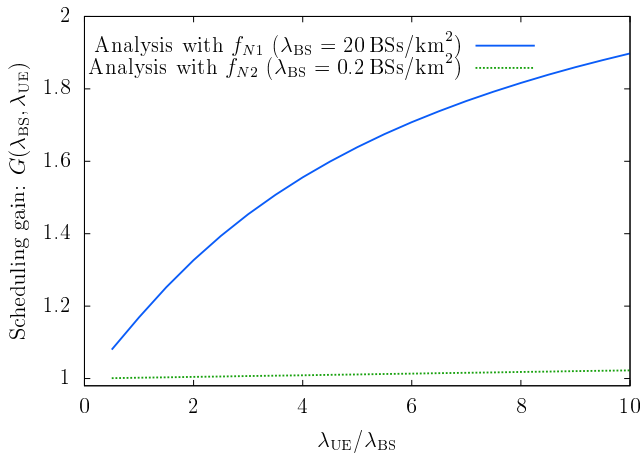}
	\caption{Scheduling gain $G(\lambda_{\text{BS}},\lambda_{\text{UE}})$.}
	\label{fig:scheduling_gain}
\end{figure}

In addition, Fig.~\ref{fig:scheduling_gain} shows the scheduling gains for both cases $\lambda_{\text{BS}}=0.2,\ 20\,\text{BSs/km}^2$.
We can see that the scheduling gains increase along with the ratio $\lambda_{\text{UE}}/\lambda_{\text{BS}}$ for both cases.
Note that the scheduling gain for $\lambda_{\text{BS}}=0.2\,\text{BSs/km}^2$ remains small in the given range $\lambda_{\text{UE}}/\lambda_{\text{BS}} \leq 10$
because the number of achievable users is still small due to relatively small $\lambda_{\text{UE}}$.

\section{Conclusions}\label{sec:conclusion}
In this paper, we derived the SINR CCDF and the scheduling gain of the normalized SNR-based scheduling
in uplink cellular networks.
The SINR CCDF and the average data rate were obtained in the form of the expected value
in terms of the number of users involved in a particular scheduling, which affects the multi-user diversity.
Noticing that the maximum transmit power constraint restricts the existing range of a scheduled user unlike the downlink analysis,
we provided the distributions of the number of users for two incompatible cases:
where all the users in the corresponding cell are scheduled and where users at cell edges are not scheduled.
We confirmed the accuracy of the analysis through Monte Carlo simulations for both cases,
and we observed that the scheduling gain increased as the involved users per cell increased owing to the multi-user diversity,
as was also observed in the downlink analysis \cite{Ohto2017}.

\section*{Acknowledgment}
This work was supported in part by JSPS KAKENHI Grant Number JP17J04854 and KDDI Foundation.

\bibliographystyle{IEEEtran}
\bibliography{my_bib.bib}

\end{document}